\newtheorem{tw}{Theory}[section]
\newtheorem{lm}[tw]{Lemma}
\newtheorem{obs}[tw]{Observation}
\newenvironment{pr}[1][Property]{\begin{trivlist}
\item[\hskip \labelsep {\bfseries #1}]}{\end{trivlist}}
\begin{document}

\begin{titlepage}
\begin{center}
{\LARGE Uniwersytet Jagielloński} \\[5pt]
{\Large Wydział Matematyki i Informatyki} \\[3pt]
{\Large Zespół Katedr i Zakładów Informatyki Matematycznej}
\vspace{2.5cm}

{\Large Agnieszka Łupińska}\\

\vspace{3cm}
{\huge \bf A Parallel Algorithm to Test Chordality of Graphs}\\
\vspace{4cm}

\end{center}
\begin{flushright}
Promoter:\\
{Dr Maciej Ślusarek}\\
\end{flushright}
\vspace{1.5cm}
\begin{center}
{\Large Kraków 2013}
\end{center}
\end{titlepage}
\newpage
\tableofcontents
\newpage

\section{Introduction}

A graph $G$ is chordal if each cycle of size greater than 3 in $G$ has a~chord, 
that is an edge between two non-adjacent vertices on the cycle. 
We present a simple parallel algorithm to test chordality of graphs 
which is based on~the~parallel Lexicographical Breadth-First Search 
algorithm. In~total, the algorithm takes time $O(N)$ on $N$-threads machine and it 
performs work $O(N^2)$, where $N$ is the number of vertices in a graph. Our~implementation 
of the algorithm uses a GPU environment Nvidia CUDA~C. 
The~algorithm is implemented in CUDA 4.2 and it has been tested on Nvidia 
GeForce GTX 560 Ti of compute capability 2.1. At the end of the thesis 
we~present the~results achieved by our implementation and compare them 
with the results achieved by the sequential algorithm.

This thesis is organized as follows. Section 2 is an introduction to~the~parallel programming
using the GPU environment Nvidia CUDA C. Section 3 introduces the basic graph definitions used
throughout the paper. Then~it~provides an overview of the graph theory related to~the~LexBFS 
algorithm and chordal graphs. Section 4 introduces the LexBFS algorithm and its two most 
known implementations. Section 5 provides the sequential algorithm to test chordality of graphs 
and the analysis of its correctness and time complexity. In section 6 we 
present our parallel LexBFS algorithm and a parallel algorithm to test chordality of graphs. 
In section 7 we give the performance results of our parallel implementation compared to
the sequential algorithm. In~section 8 we discuss our results and the possible further work.
\newpage

\section{CUDA Programming}

CUDA (Compute Unified Device Architecture) is a general-purpose parallel computing architecture 
for Nvidia GPUs. We present the main features of CUDA C used in our implementation. For more details, 
we recommend NVIDIA CUDA C Programming Guide~[4] and CUDA C Best Practices Guide [5]. 

CUDA C extends the C/C++ programming model to the heterogeneous programming model which operates on 
the CPU called the host, and on the~GPU called the device. In CUDA, a kernel is a function
executed in~parallel by many threads on the device. A thread is a sequence of executions. The~threads 
are grouped into blocks which are grouped into a~grid. Each~thread has a unique identifier in a grid. 
It can be computed within a~kernel through a combination of the built-in variables: 
$threadIdx$, $blockIdx$ and $gridIdx$. 

All the threads may access data from the local, shared, constant, texture and global memory. To learn 
the texture memory and the constant memory see [4]. The local memory is a private memory of 
a thread. The shared memory is common to all threads within the same block and its lifetime is 
the same as the block. All theads have access to the same global memory.

The simple model of a program using the CUDA architecture is as follows: allocate and initialize
data on the host, allocate data on the device, transfer data from the host to the device, run the 
CUDA kernels on the device and transfer data from the device back to the host.

The CUDA architecture allows to synchronize executions of the threads in one block by using the 
$\_syncthreads$ function. It works as a lock: the~threads, which reach that point in the code, wait for 
other threads which have not done it yet. 

One of the methods to synchronize the threads between blocks, is to split the computitions in 
the synchronization points and to run each of that piece as a separate kernel. We use this method 
in our work.

\section{Background}

\subsection{Basic graph definitions}

We introduce the following terminology to be used throughout this thesis. Let $G=(V,E)$ be 
an undirected graph with the vertex set $V$ and the edge set $E$, where $E$ consists of unordered 
pairs of vertices in $V$. We denote the~size of $V$ by N and the size of $E$ by M. If $(u, v) \in E$ 
then we abbreviate it to $uv$. We use $N_x$ to denote the neighborhood of a vertex $x \in V$ 
excluding $x$. 

Let $\mathbb{N}$ be the set of the natural numbers and let $label_x$ be the label of~$x$, where 
$label_x$ is a string over the alphabet $\mathbb{N}$. We use $\circ$ to denote the~concatenation 
operator for labels.

A bijection $\pi = \{1, 2, \ldots, N\} \rightarrow V$ is called an \textit{ordering} of $G$.
Let~$\pi ^ {-1}$ denote the inverse of $\pi$ and thus $\pi^{-1}(v)$ is the index of $v$ 
in~the~ordering of~$G$. Let $\pi = v_1, \ldots, v_N$ be the ordering of $G$. We use $LN_{v_i}$ to
denote the~neighborhood of $v_i$ in the subgraph induced by $v_1, \ldots, v_{i-1}$.

We say that an ordering $\pi$ of $G$ is a BFS order if it is generated by the~well-known 
BFS algorithm (see for example [7]). We present this algorithm in~the~next chapter. 
Note that a graph can have many different BFS orderings.

A graph $G$ is \textit{chordal} if each cycle of size greater than 3 in $G$ has a \textit{chord}, 
that~is an edge between two non-adjacent vertices on the cycle. A vertex $x$ is \textit{simplicial} 
if $N_x$ induces a clique. An order $v_1, v_2, \ldots, v_k$ is a~\textit{perfect elimination order} 
if, for each $i$, $v_i$ is a simplicial vertex in the~graph induced by $v_1, v_2, \ldots, v_{i-1}$.

\subsection{Overview}

The LexBFS algorithm, in addition to the recognition of chordal graphs, has many other applications. 
The LexBFS algorithm is used as a part of~many graph algorithms such as recognizing interval graphs, 
or computing transitive orientation of comparability graphs, \text{co-comparability} graphs
and~interval graphs.

An \textit{orientation} of an undirected graph $G$ is a directed graph which is~created by assigning 
a direction to each edge. An orientation of edges is \textit{acyclic} if it does not contain a directed 
cycle. An orientation of edges is \textit{transitive} for all $x, y, z$, if $x \rightarrow y$ is an edge 
and $y \rightarrow z$ is an edge then $x \rightarrow z$ is also an edge. A \textit{comparability} graph 
is an undirected graph that has an~acyclic transitive orientation on edges. A \textit{co-comparability} 
graph is a~graph $G$ whose complement $\overline{G}$ is a comparability graph. An \textit{interval} 
graph $G$ is an undirected graph that is the~intersection graph of intervals on the real line, i.e.~
$G = (V, E)$, where $V = \{I_1, I_2, \ldots, I_n\}$, $\forall_i$ $I_i$ is an interval on~the~real line 
and $(I_i, I_j) \in E \Leftrightarrow I_i \cap I_j \ne \emptyset$. 

Gilmore and Hoffman [1] proved that a graph is an interval graph if~and~only if it is a chordal 
graph and a co-comparability graph (Figure 1). 

\begin{figure}[h]
    \begin{center}
        \includegraphics[width=6cm]{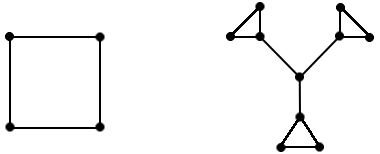}
        \caption{From left: $C_4$ is a co-comparability graph and it is not an interval graph.
        The next graph is a chordal graph and it is not an interval graph.}
    \end{center}
\end{figure}

\newpage

Habib, McConnell, Paul and Viennot [2] gave a $O(N+MlogN)$ algorithm for the~transitive orientation 
of a comparability graph and a~\text{$O(N+M)$} algorithm to recognize interval graphs. Both of them
use the LexBFS algorithm. It can be proved [2] that if $G$ is a co-comparability graph \text{and $\pi = 
v_1, v_2, \ldots, v_n$} is a LexBFS order of $G$ then there exists a transitive orientation 
of $\overline{G}$ such~that $v_n$ is a sink/source of the orientation. Moreover if $G$ is a 
comparability graph and $\pi = v_1, v_2, \ldots, v_n$ is a LexBFS order of $\overline{G}$ then there 
exists a~transitive orientation of $G$ such that $v_n$ is a sink/source of the orientation.

\section{Lexicographic Breadth-First Search}

The Lexicographic Breadth-First Search (LexBFS) algorithm was introduced by D. Rose, R. Tarjan 
and G. Lueker in 1976 for finding a perfect elimination order, if any exists. The LexBFS 
algorithm is a restriction of~the~widely used Breadth-First Search (BFS) algorithm in the 
following sense: each possible order of vertices produced by LexBFS is a BFS order. The 
difference between them is that the LexBFS algorithm additionally assigns labels to nodes 
and then in each step of the algorithm chooses a node, whose label is lexicographically 
the largest. 

\subsection{Characterization of BFS and LexBFS orderings}

We present and compare two characterizations of the vertex orderings that can by obtained 
by the BFS algorithm and the LexBFS algorithm.\\

Let $x$ be vertex of a graph $G$ and let $N_x$ denote its neighborhood. Let $Q$ be 
a FIFO queue. We present an equivalent version of Tarjan's BFS algorithm:\\

\begin{tabular}{l}
\hline
Breadth-Frist Search algorithm\\
\hline
\\
BFS()\\
\hspace{0.5cm}  for $x = 1$ to $n$ do $\pi^{-1}(x) = 0$\\
\hspace{0.5cm}  $Q = \emptyset$\\
\hspace{0.5cm}  for $i = 1$ to $n$ do\\
\hspace{1cm}        if $Q$.nonEmpty() then $x = Q$.dequeue()\\
\hspace{1cm}        else $x$ = any node s.t. $\pi^{-1}(x) = 0$\\
\hspace{1cm}        $\pi^{-1}(x) = i$, $\pi(i) = x$\\
\hspace{1cm}        for each $y \in N(x)$ s.t. $\pi^{-1}(y) = 0$ do\\
\hspace{1.5cm}          if $y \notin Q$ then $Q$.enqueue($y$)\\
\\
\hline
\end{tabular}
\\
\\
\begin{pr}[Property B.]
    If $a < b < c$, $ac \in E$ and $ab \notin E$ then exists $d < a$ such that $db \in E$. 
\end{pr}

\begin{lm}
    $\pi$ is a BFS order $\Leftrightarrow$ $\pi$ satisfies the B-property.
\end{lm}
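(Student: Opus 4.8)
\medskip
\noindent\emph{Proof strategy.} The plan is to prove the two implications separately, each time translating the FIFO behaviour of the queue $Q$ in BFS() into a statement about positions in $\pi$. I identify each vertex with its position, so ``$a<b$'' abbreviates $\pi^{-1}(a)<\pi^{-1}(b)$; and for a vertex $x$ with $LN_x\neq\emptyset$ I write $p(x)$ for the $\pi$-earliest vertex of $LN_x$ --- the vertex that first enqueues $x$ in any BFS run consistent with $\pi$.

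For $(\Rightarrow)$, assume $\pi$ is the output of some execution of BFS(), and take $a<b<c$ with $ac\in E$ and $ab\notin E$. When $a$ is processed, $c$ is still unnumbered (it comes later in $\pi$), so $c$ is in $Q$ from the end of that step until it is numbered at step $\pi^{-1}(c)$; in particular $Q$ is non-empty throughout step $\pi^{-1}(b)$, so $b$ is dequeued rather than chosen by the else-branch, hence $b$ was enqueued at some step $s$. Since $Q$ is FIFO and $b$ leaves it before $c$, and $c$ was enqueued by step $\pi^{-1}(a)$, we get $s\le\pi^{-1}(a)$. The case $s=\pi^{-1}(a)$ would mean $b$ was enqueued while $a$ was processed, forcing $ab\in E$, a contradiction; so $s<\pi^{-1}(a)$, and then $b$ was enqueued while the vertex $d:=\pi(s)$ was being processed, so $d<a$ and $db\in E$. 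This $d$ is the witness Property B requires.

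For $(\Leftarrow)$, assume $\pi$ satisfies Property B; I will produce an execution of BFS() whose output is $\pi$. I fix the free choices as follows: when processing a vertex, enqueue its unnumbered neighbours in increasing $\pi$-order, and whenever $Q$ is empty at step $i$, choose $v_i$. It then suffices to prove by induction on $i$ the invariant: \emph{if the first $i-1$ numbered vertices are $v_1,\dots,v_{i-1}$, then at step $i$ either $Q$ is empty (and then $LN_{v_i}=\emptyset$, so $v_i$ is a legal else-branch choice) or $v_i$ is the head of $Q$.} Under the stated tie-breaking, $Q$ at step $i$ is exactly $\{v_m : m\ge i \text{ and } p(v_m)<v_i\}$, ordered lexicographically by the pair $(p(v_m),v_m)$. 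The only place Property B is needed is the monotonicity claim: \emph{if $i<m$ and $p(v_m)<v_i$ then $LN_{v_i}\neq\emptyset$ and $p(v_i)\le p(v_m)$}; indeed, writing $t:=p(v_m)$ we have $t<v_i<v_m$ and $tv_m\in E$, so if $tv_i\in E$ then $p(v_i)\le t$, and otherwise Property B applied to $(a,b,c)=(t,v_i,v_m)$ gives $d<t$ with $dv_i\in E$, whence $p(v_i)\le d<t$. Granting this claim: when $Q\neq\emptyset$ at step $i$, picking any $v_m\in Q$ shows $p(v_m)<v_i$, hence $v_i\in Q$ and the pair $(p(v_i),v_i)$ is lexicographically minimal in $Q$, i.e.\ $v_i$ is the head. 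The induction closes and $\pi$ is a BFS order.

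The step I expect to be the main obstacle is the $(\Leftarrow)$ direction: one must commit to the correct tie-breaking rule, describe precisely the contents and internal order of the FIFO queue after an arbitrary partial run, and check that Property B by itself is strong enough to force $v_i$ to the head at every step. Once the monotonicity of $p$ along $\pi$ is isolated, the remainder is bookkeeping, but finding that lemma --- and recognizing it is exactly the property the BFS queue needs --- is the crux.
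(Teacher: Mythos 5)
Your proof is correct. The ($\Rightarrow$) direction is essentially the paper's argument made precise: the paper also reasons that $c$ sits in the FIFO queue from the moment $a$ is processed, so $b$ can only precede $c$ if some earlier-processed $d<a$ enqueued it; your extra care in ruling out $s=\pi^{-1}(a)$ and in noting that each vertex is enqueued only once just tightens the same idea. The ($\Leftarrow$) direction, however, is where you genuinely diverge. The paper contents itself with a one-paragraph sketch: it asserts that being a BFS order amounts to the condition ``for $d<a$, the unvisited neighbours of $d$ precede the unvisited neighbours of $a$'' and then declares this ``equivalent to property B,'' without constructing a run of the algorithm or handling ties and common neighbours. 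You instead exhibit an explicit execution (fixing the tie-breaking rule), characterize the queue at every step as $\{v_m : m\ge i,\ p(v_m)<v_i\}$ ordered by $(p(v_m),v_m)$, and isolate the monotonicity lemma $p(v_i)\le p(v_m)$ for $i<m$ with $p(v_m)<v_i$ as the exact point where Property B is used. What your route buys is a complete, checkable induction that actually produces a BFS run realizing $\pi$ (and as a by-product identifies $p$ as the BFS parent map); what the paper's route buys is brevity, at the cost of leaving the converse essentially unproved. Your version could be substituted for the paper's ($\Leftarrow$) argument without change to the surrounding text.
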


\begin{proof}
    ($\Rightarrow$)
    Let $\pi$ be a BFS order and let $Q$ be a FIFO queue used by~the~algorithm. We assume 
    that the nodes of~the~graph~$G$ are~renumbered according to~$\pi$. 
   
    \begin{figure}[h]
        \begin{center}
            \includegraphics[width=3cm]{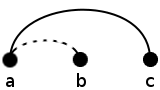}
        \end{center}
    \end{figure}

    During the algorithm $a$ was visited before $b$ which was visited before $c$.
    When the algorithm visits $a$ then it adds $c$ to $Q$, because of $ac \in G$, 
    and~it~does not add $b$ to $Q$, because of $ab \notin G$. Then $b$ can be first in $Q$ 
    before $c$ if and only if the algorithm had visited some $d$ before it visited $a$ 
    and $db \in G$ then the algorithm had added $b$ to $Q$ before it added $c$ to $Q$.
    In~the~same words, there exists $d < a$ such that $db \in G$.

    \begin{figure}[h]
        \begin{center}
            \includegraphics[width=4cm]{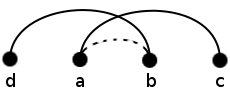}
        \end{center}
    \end{figure}

    ($\Leftarrow$)
    Let $\pi_0$ be an order satisfying a property $B$. We want to show that $\pi_0$ is
    a BFS order. Let $d$ be some vertex of graph $G$. When the BFS algorithm visits $d$ 
    it pushes on a queue all neighbors of $d$ which have not been visited yet. Then
    BFS pops the next vertex from a queue. Let $a$ be another vertex of a graph $G$. If the 
    BFS algorithm visits $d$ before $a$ then all not visited neighbors of $d$ are placed 
    in $\pi_0$ before all not visited neighbors of $a$. Hence it is sufficient to show 
    that if $d < a$ in order $\pi_0$, then all neighbors of $d$ which are placed on to the
    right of $d$ in $\pi_0$, lie before all neighbors of $a$ (which are to the right of $a$
    in $\pi_0$). But it is equivalent to property $B$. See the figure above.
    
\end{proof}

Let $x$ be a vertex of a graph $G$ and let $N_x$ be the neighborhood of $x$. Let~$label_x$ 
denote the label of $x$. Let $pQ$ be a priority queue of vertices with~priority on 
lexicographically the largest label. Consider the following algorithm:\\

\begin{tabular}{l}
\hline
Lexicographic Breadth-Frist Search algorithm\\
\hline
\\
LexBFS()\\
\hspace{0.5cm}  for $x$ = $1$ to $n$ do\\
\hspace{1cm}        $\pi^{-1}(x)$ = $0$\\
\hspace{1cm}        $label_x$ = $\emptyset$\\
\hspace{1cm}        $pQ$.enqueue($x$)\\
\hspace{0.5cm}  for $i$ = $1$ to $n$ do\\
\hspace{1cm}        $x$ = $pQ$.dequeue() //$label_x$ is lexicographically the largest\\
\hspace{1cm}        $\pi^{-1}(x)$ = $i$, $\pi(i)$ = $x$\\
\hspace{1cm}        for each $y \in N_x$ s.t. $\pi^{-1}(y) = 0$ do\\
\hspace{1.5cm}          $label_x$ = $label_x \circ (n-i)$\\
\\
\hline

\end{tabular}\\

\begin{pr}[Property LB.]
    If $a < b < c$, $ac \in E$ and $ab \notin E$ then exists $d < a$ such that $db \in E$ and 
    $dc \notin E$. 
\end{pr}

\begin{lm}
    $\pi$ is a LexBFS order $\Leftrightarrow$ $\pi$ satisfies LB-property.
\end{lm}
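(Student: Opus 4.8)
The plan is to mimic the structure of the proof of the BFS lemma, since Property LB is just Property B augmented with the extra condition $dc\notin E$. I would prove both implications separately, tracking how the LexBFS labels evolve and using the fact — already noted in the text — that every LexBFS order is in particular a BFS order, so Property B is available for free.

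For the forward direction ($\Rightarrow$), assume $\pi$ is a LexBFS order with vertices renumbered according to $\pi$, and suppose $a<b<c$, $ac\in E$, $ab\notin E$. Since a LexBFS order is a BFS order, Property B gives some $d<a$ with $db\in E$; I just need to argue such a $d$ can be chosen with $dc\notin E$. The key observation is about labels: at the moment $a$ is dequeued (index $a$), the algorithm appends $(n-a)$ to $label_c$ but not to $label_b$, since $ac\in E$ and $ab\notin E$. So if no vertex processed strictly before $a$ that is adjacent to $b$ were non-adjacent to $c$, then every such earlier neighbor of $b$ would also be a neighbor of $c$, forcing $label_b$ to be a prefix-or-equal of $label_c$ restricted to those early contributions, and after step $a$ we would get $label_c > label_b$ lexicographically at that point — and this strict inequality, once established by a higher entry, is preserved for the rest of the run up to when $b$ is dequeued. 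That would mean $b$ is dequeued after $c$, contradicting $b<c$. Hence some $d<a$ with $db\in E$ must have $dc\notin E$. The main obstacle here is making the ``label comparison is preserved'' argument precise: one must check that once $label_c$ strictly exceeds $label_b$ at a common coordinate (with equality before), later appended entries (which are smaller numbers $n-i$ with $i$ increasing, and appended to whichever of $b,c$ is adjacent to the current vertex) cannot reverse the lexicographic order before $b$ is removed from the queue; this uses that labels are compared lexicographically and that we only ever append to a label while the vertex is still in the priority queue.

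For the reverse direction ($\Leftarrow$), assume $\pi_0$ satisfies Property LB; I want to show $\pi_0$ is a realizable LexBFS order. The approach is to run LexBFS and, whenever there is a tie in the largest label, break it by choosing the vertex that comes first in $\pi_0$; then argue by induction on $i$ that the vertex LexBFS selects at step $i$ is exactly $\pi_0(i)$. Suppose for contradiction that $\pi_0$ agrees with the algorithm on the first $i-1$ choices but at step $i$ the algorithm would dequeue some $c\neq b:=\pi_0(i)$ with $c$ occurring later in $\pi_0$; then $label_c \geq label_b$ (with strict inequality, else the tie-break would pick $b$). Lexicographic strictness means there is a coordinate where $label_c$ has an entry $n-a$ that $label_b$ does not, with the two labels agreeing on all larger entries, i.e.\ there is an already-processed vertex $a$ (with $a<b<c$ in $\pi_0$, since $a$ was among the first $i-1$) with $ac\in E$, $ab\notin E$, and no processed vertex $d$ with $d<a$ and $db\in E$, $dc\notin E$ — because any such $d$ would have contributed an entry $n-d > n-a$ to $label_b$ but not $label_c$, making $label_b>label_c$ or at least destroying the agreement above coordinate $n-a$. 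This directly contradicts Property LB. The subtlety to get right is the bookkeeping translating ``first position where the labels differ'' into the triple $(a,b,c)$ and the non-existence of the witness $d$; once that dictionary is set up, Property LB closes the argument immediately.

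Overall the proof is a refinement of the BFS argument, and the hard part is purely the careful handling of lexicographic label comparisons — specifically, establishing that the relative order of two labels in the priority queue, once decided at some coordinate, persists until one of them is dequeued, and translating a strict lexicographic difference into an adjacency statement about a specific earlier vertex.
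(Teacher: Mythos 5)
Your proof is correct, but it does not follow the paper's route throughout, so a comparison is worth making. In the forward direction you and the paper do essentially the same thing: both arguments track the labels of $b$ and $c$, use the fact that the appended numbers $N-i$ strictly decrease over time, and locate the earlier vertex $d$ that accounts for the first coordinate where $label_b$ beats $label_c$; the paper pinpoints the index $j_0$ directly, while you argue by contradiction (if every $d<a$ adjacent to $b$ were adjacent to $c$, then $label_c$ would strictly dominate $label_b$ from step $a$ onward, so $b$ could not be dequeued first), which forces you to prove the persistence-of-comparison fact you rightly flag as the delicate step. One remark: your appeal to Property B is superfluous --- your contradiction already produces a $d<a$ with $db\in E$ and $dc\notin E$ --- and it is better avoided, since in the paper the statement ``every LexBFS order is a BFS order'' is justified only \emph{after} this lemma via LB $\Rightarrow$ B, so leaning on it here would be mildly circular. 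In the converse direction your argument is genuinely different from, and sharper than, the paper's: the paper argues informally that the ordering constraint ``unvisited neighbours of $d$ not adjacent to $a$ precede unvisited neighbours of $a$'' is equivalent to Property LB, whereas you run LexBFS with ties broken according to $\pi_0$ and prove by induction on the prefix that the algorithm must select $\pi_0(i)$ at step $i$, translating a strict lexicographic gap between $label_c$ and $label_b$ into a triple $(a,b,c)$ with no witness $d$, contradicting LB. This exchange-style induction is the standard rigorous way to certify that an order satisfying LB is realizable by the algorithm, and it buys you a precise statement where the paper's sketch leaves the quantification over ``all neighbours'' somewhat loose; the cost is the bookkeeping dictionary between label coordinates and processed vertices, which you have set up correctly.
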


\begin{proof}
    ($\Rightarrow$) Let $\pi$ be the LexBFS order and let $pQ$ be a priority queue used 
    by~the~LexBFS algorithm. We assume that the nodes of the graph $G$ are~renumbered according to $\pi$. 
    
    \begin{figure}[h]
        \begin{center}
            \includegraphics[width=3cm]{bProperty1.png}
        \end{center}
    \end{figure}

    During the algorithm $a$ was visited before $b$ which was visited before $c$.
    Let $b_1, b_2, \ldots \ $ be the label of $b$ and $c_1, c_2, \ldots \ $ be the label of $c$.
    Let $i$ be the~number of iteration and $N$ be the number of vertices in $G$.
    
    When the algorithm visits $a$, it concatenates the label of $c$ with \text{$N-i$}, as~$ac \in E$, 
    and it does not concatenate the label of $b$ with $N-i$, as \text{$ab \notin E$}. Since the~label 
    of $b$ is lexicographically larger than the label of $c$ then there~is an index $j_0$ in 
    the labels such that $\forall{j<j_0}:\ b_j = c_j$ and~$b_{j_0} > c_{j_0}$. The~index $j_0$ 
    is the first index at which the labels were updated in different iterations. The label of $b$ 
    was updated before the label of $c$ because $b_{j_0} > c_{j_0}$ and~in~each iteration the 
    number $N-i$ decreases as $i$ increases. The $j_0$ index exists if~and~only if the algorithm 
    visited some $d$ such that the algorithm concatenated the label of $b$ with $b_{j_0}$ and 
    it did not concatenate the label of $c$ with~$b_{j_0}$. It must be that $d < a$ because 
    otherwise we would have:

    \begin{enumerate}
        \item if $ad \in E$ then $a < d < c < b$, as the numbers $N-i$ decreased and~$ac \in E, ab\notin E$.
        \item if $ad \notin E$ then $a < c < d < b$, as $ac \in E$ and $ab \notin E$.
    \end{enumerate}
    Both cases are in contradiction to $a < b < c$. Therefore, there exists $d < a$ such that 
    $db \in E$ and $dc \notin E$. 

    \begin{figure}[h]
        \begin{center}
            \includegraphics[width=4cm]{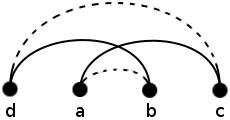}
        \end{center}
    \end{figure}
   
    ($\Leftarrow$) 
    Let $\pi_0$ be an order satisfying a property LB. We want to show that~$\pi_0$ is a LexBFS
    order. Let $d$ be a vertex of graph $G$. When the LexBFS algorithm visits $d$ then it updates 
    the labels of all not visited neighbors of~$d$. Let $a$ be another vertex of $G$. 
    If the LexBFS algorithm visits $d$ before $a$ then~all neighbors of $d$ which are not adjacent 
    to $a$ have labels greater than~the labels of all neighbors of $a$, because the numbers 
    added to~the~end of~labels decrease for successive vertices. So in order to prove the claim 
    we~must show that if $d < a$ in $\pi_0$ then all neighobrs of $d$, which are to~the~right of $d$
    in $\pi_0$ and they are not adjacent to $a$, lie before all neighbors of $a$, which~are to the right
    of $a$ in $\pi_0$. But again, it is equivalent to property LB. See the figure above.

\end{proof}

It is easy to see that the LB-property implies B-property so the LexBFS algorithm is 
a restriction of the BFS algorithm.

\subsection{Two implementations}

The first implementation of the LexBFS algorithm was proposed by D.J. Rose, R.E. Tarjan, G. S. 
Leuker in 1976 [3]. They use a double-linked list $L_k$ to store vertices of the same label $k$.
All lists $L_k$ are stored in the list $L$ in descending order given by labels $k$.
Additionally, each vertex $x$ has a pair of pointers the first of which is leading to the 
list $L_k$ containing $x$ and~the~second one is leading to the place of $x$ on the list $L_k$. 

At the beginning of the algorithm, all the vertices have the same label $\emptyset$ and they are 
on the list $L_{\emptyset}$. There are two operations: getting a vertex $x$ with the lexicographically
largest label and updating labels of all nodes adjacent to $x$. To perform the first 
operation the algorithm takes the first list $L_k$ from $L$ and then returns the first vertex from 
$L_k$. In the second operation, for each $y$ adjacent to $x$, the algorithm concatenates 
the label $k$ of the vertex $y$ with the number $(N-i)$. Then the algorithm removes $y$ from 
the list $L_k$ and inserts it to the list $L_{k \circ (N-i)}$, where $i$ is the iteration number. 
If~the~list $L_{k \circ (N-i)}$ does not existed in $L$ then the algorithm creates it.
This~implementation has the $O(N+M)$ time complexity. 

The second implemenation was proposed by Habib, McConnell, Paul and~Viennot in 2000 [2] and
it uses the partition refinement technique. Let~$V$ be a~doubly-linked list consisting of all 
vertices of $G$. Let $L$ be a doubly-linked list of classes of vertices. All vertices in a class
occupy consecutive elements in $V$ and the class is represented by a pair of pointers to the first 
and the last element in the class. Each~vertex $x$ has a pointer to the class containing~$x$.\\ 

\begin{tabular}{l}
    \hline
    The LexBFS algorithm using partition refinement\\
    \hline
    \\
    LexBFS()\\
    \hspace{0.5cm}  $L$ - a single-element list of a class containing all vertices\\
    \hspace{0.5cm}  for $i$ = $1$ to $n$ do\\
    \hspace{1cm}        $x$ - the first element of the first class on the list $L$\\
    \hspace{1cm}        remove $x$ from $L$\\
    \hspace{1cm}        $\pi^{-1}(x) = i,\ \pi(i) = x$\\
    \hspace{0.5cm}  //\textit{partition}\\
    \hspace{0.5cm}  for each class $C \in L$ do\\
    \hspace{1cm}        $C_x = C \cap N_x$\\
    \hspace{1cm}        $C_2 = C\ \textbackslash\ C_x$\\
    \hspace{1cm}        replace $C$ by $C_x$, $C_2$ in $L$\\
    \\
    \hline
\end{tabular}\\

During the partition, each $y \in N_x$ is removed from an old class $C$ and it is inserted
to some new class $C_x$. The \textit{partition} procedure can be implemented in $O(|N_x|)$ time.

\section{Chordal graphs}

Before we present the algorithm to test chordality of graphs we prove a~theorem introduced
by D.J. Rose, R.E. Tarjan, G. S. Leuker [3]. 


\begin{tw}[Rose, Tarjan, Leuker]
    A graph $G$ is chordal if and only if a LexBFS order of $G$ is a perfect elimination 
    order.
\end{tw}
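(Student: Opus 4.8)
The plan is to prove the two implications separately, leaning on the characterisation of LexBFS orders by the LB-property proved above; throughout, a \emph{perfect elimination order} is read in the sense fitting the notation $LN$, namely an ordering $v_1,\dots,v_N$ such that $LN_{v_i}$ induces a clique for every $i$. The implication "$\Leftarrow$" is the classical (and elementary) fact that \emph{any} graph possessing a perfect elimination order is chordal, so in particular this holds when the order happens to be a LexBFS order. To see it, let $\pi$ be a perfect elimination order and let $C$ be a cycle of length at least $4$; take the vertex $v$ of $C$ with the largest index under $\pi$ and let $x,y$ be its two neighbours on $C$. Both $x$ and $y$ precede $v$, so $x,y\in LN_v$ and hence $xy\in E$; since $|C|\ge 4$, $x$ and $y$ are not consecutive on $C$, so $xy$ is a chord. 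Thus every long cycle is chorded and $G$ is chordal.

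For "$\Rightarrow$" I would use induction on $N=|V|$, the case $N\le 1$ being trivial. Let $G$ be chordal and $\pi=v_1,\dots,v_N$ a LexBFS order. First observe that $v_1,\dots,v_{N-1}$ is a LexBFS order of $G'=G-v_N$: since $v_N$ is visited last it never updates the label of an unvisited vertex, so deleting it alters neither the labels nor the choices made in steps $1,\dots,N-1$ (equivalently, one checks that $v_1,\dots,v_{N-1}$ still satisfies the LB-property and invokes the Lemma characterising LexBFS orders via the LB-property). As $G'$ is chordal and smaller, the induction hypothesis gives that $v_1,\dots,v_{N-1}$ is a perfect elimination order of $G'$, and because $LN_{v_i}$ is the same whether computed in $G$ or in $G'$ for $i<N$, all that remains is the key lemma: \emph{the last vertex of a LexBFS order of a chordal graph is simplicial}, i.e.\ $LN_{v_N}=N_{v_N}$ is a clique.

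To prove the key lemma, renumber $V$ by $\pi$ and suppose $N_{v_N}$ is not a clique, so there are $a<b$ in $N_{v_N}$ with $ab\notin E$. Applying the LB-property to $a<b<v_N$ (using $av_N\in E$ and $ab\notin E$) produces $d<a$ with $db\in E$ and $dv_N\notin E$. If $da\in E$, then $a,d,b,v_N$ taken in this cyclic order form an induced $4$-cycle — its sides $ad,db,bv_N,v_Na$ are all edges and its two diagonals $ab,dv_N$ are non-edges — contradicting chordality; hence $da\notin E$. But then $d<a<b$ with $db\in E$ and $da\notin E$ is again an instance to which the LB-property applies, giving $d_2<d$ with $d_2a\in E$ and $d_2b\notin E$, and the process can be repeated. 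At each stage either two of the vertices produced so far turn out to be adjacent — in which case one closes up a short induced cycle (a $C_4$ or a $C_5$) running through $v_N$, $a$, $b$ and the last two vertices produced, again contradicting chordality — or the LB-property applies once more to a strictly smaller triple; since the indices produced strictly decrease, the second alternative cannot recur forever. So some stage ends in a contradiction, so $N_{v_N}$ is a clique, completing the induction and the theorem.

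The step I expect to be the main obstacle is precisely this descent inside the key lemma. Each application of the LB-property spawns a new vertex together with a handful of edges and non-edges, and to make the argument terminate cleanly one must carry along enough of them — informally, two "arms" $v_N,a,d_2,d_4,\dots$ and $v_N,b,d,d_3,\dots$ on which consecutive vertices are adjacent, while every newly produced vertex is non-adjacent to its immediate predecessor and to the vertices lying earlier on the opposite arm — so that when the descent stalls, the short cycle one reads off is genuinely chordless. Non-adjacency to the immediate predecessor is exactly what a single LB-application hands over; the "long-range" non-adjacencies to the rest of the opposite arm are not, and securing them (by an auxiliary induction, or by a more careful choice of the triple fed to the LB-property) is where the real work lies — the remaining verifications are routine.
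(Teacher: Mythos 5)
Your route is essentially the paper's. The backward implication is the identical "rightmost vertex of a long cycle" argument, and your key lemma is the paper's forward argument in thin disguise: the paper shows directly that each $v_i$ is simplicial in the graph induced by $v_1,\dots,v_{i-1}$, using only vertices to the left of $v_i$, which is exactly the statement "the last vertex of a LexBFS order is simplicial" applied to the prefix; your outer induction on $N$, justified by the observation that $v_1,\dots,v_{N-1}$ still satisfies the LB-property in $G-v_N$ (and that $G-v_N$ is chordal and $LN_{v_i}$ is unchanged for $i<N$), is correct but is only a repackaging of that.

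The substantive point is the descent inside the key lemma, and the gap you flag there is real. Each LB application hands you non-adjacency of the new vertex only to the \emph{largest} member of the triple it was applied to; to keep descending you must separately rule out adjacency to the smallest member, i.e.\ to the vertex produced immediately before, and that has to come from chordality via a chordless cycle. Your $C_4$ argument in round one and the $C_4$/$C_5$ case split in round two do close (the unknown chord $d_2v_N$ is handled by the sub-case giving a $C_4$), but from the third round on the natural cycle through $v_N,a,b$ and the newest vertices carries chords of unknown status ($d_3v_N$, $d_3b$, $d_2v_N$, and so on), so the clean dichotomy "either an induced $C_4$/$C_5$ appears or the descent continues" no longer holds as stated; one must strengthen what is carried along the two arms (every produced vertex non-adjacent to all earlier vertices of the opposite arm except its designated neighbour), proved by an auxiliary induction or by recursing on offending chords. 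You name this yourself but do not do it, so as written the proof is incomplete at exactly that step. For fairness: the paper is no more rigorous here --- it performs precisely your first two rounds and then asserts that "this step can be repeated infinitely" --- so your proposal matches the paper's argument and its level of detail; turning either into a complete proof requires supplying that bookkeeping (or citing Rose--Tarjan--Lueker for the lemma).
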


\begin{proof}
    ($\Rightarrow$) Let $G$ be a chordal graph and let $\pi = v_1, \ldots, v_n$ be its LexBFS
    order. We assume that the nodes of $G$ are renumbered according to $\pi$. We~show that each 
    vertex $v_i$ is simplicial in the graph induced by $v_1, \ldots, v_{i-1}$. 

    Assume by contradiction that some $v_i$ is not simplicial. Then there exist $a, b \in \pi$ 
    such that $a < b < v_i$ both adjacent to $v_i$ and not adjacent themselves.

    \begin{figure}[h]
        \begin{center}
            \includegraphics[width=3.1cm]{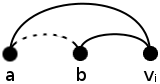}
        \end{center}
    \end{figure}

    Because $\pi$ satisfies LB-property then there exists some $c \in \pi$ \text{such that~$c < a$} and $c$ 
    is adjacent to $b$ and it is not adjacent to $v_i$. Note that $ca \notin E$ because $G$ is 
    chordal and otherwise we would have a chordless cycle $(c, a, v_i, b)$.

    \begin{figure}[h]
        \begin{center}
            \includegraphics[width=4.1cm]{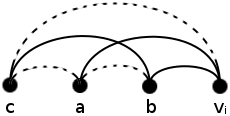}
        \end{center}
    \end{figure}

    Now we have got $c < a < b < v_i$, $ca \notin E$ and $cb \in E$. Again, we use the LB-property 
    in respect to $c, a, b$ and obtain $d \in \pi$: $d < c$, $da \in E$, $db \notin E$. Moreover $d$ 
    is not adjacent to $c$ because of the cycle $(d, c, b, v_i, a)$ and chordality of $G$.

    \begin{figure}[h]
        \begin{center}
            \includegraphics[width=5.1cm]{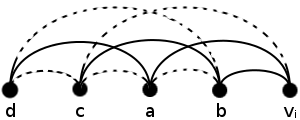}
        \end{center}
    \end{figure}

    Next time we can apply the LB-property to the vertices $d, c, a$. This step can be repeated 
    infinitely, thus contradicting the general assuption that $G$ is finite.

    So we have proved that each vertex $v_i$ is simplicial in the graph induced by $v_1, \ldots, 
    v_{i-1}$, that is the order $v_1, \ldots, v_n$ is the perfect elimination order.

    ($\Leftarrow$) It suffices to prove that if graph $G$ has any perfect elimination order
    then it is chordal.
    
    Let $\pi = v_1, \ldots, v_n$ be a perfect elimination order of $G$. Assume that in $G$ 
    there is a cycle $C$ of length $\ge 4$ and let $v_i \in C$ be the vertex of the greatest 
    index in $\pi$. Let $a$ and $b$ be vertices adjacent to $v_i$ in the cycle $C$. 
    Because $a$ and $b$ are on the left from $v_i$ in $\pi$ then they are adjacent as $\pi$ 
    is a perfect elimination order. Therefore $C$ has the chord $ab$, which finishes the proof.

\end{proof}

\subsection{Maximum Cardinality Search}

In 1984 Robert E. Tarjan and M. Yannakakis introduced in [6] the Maximum Cardinality Search
algorithm (MCS) as an alternative method for~finding a perfect elimination ordering of chordal 
graphs. The Maximum Cardinality Search instead of strings uses natural numbers as labels for 
vertices. In each iteration, the algorithm chooses a new vertex of the largest label, that is 
the vertex whose neighborhood in the graph induced by the nodes chosen so far is the largest 
among all vertices have not chosen yet. The MCS algorithm has a $O(N+M)$ time implementation [6].

Let $G$ be a graph, $pQ$ be a priority queue. For each $x \in V(G)$ $label_x \in \mathbb{N}$.
We present Tarjan and Yannakakis's MCS algorithm:\\

    \begin{tabular}{l}
        \hline
        Maximum Cardinality Search algorithm\\
        \hline
        \\
        MCS()\\
        \hspace{0.5cm}  for $x$ = $1$ to $n$ do\\
        \hspace{1cm}        $\pi^{-1}(x)$ = $0$\\
        \hspace{1cm}        $label_x = 0$\\
        \hspace{1cm}        $pQ$.enqueue($x$)\\
        \hspace{0.5cm}  for $i$ = $1$ to $n$ do\\
        \hspace{1cm}        $x$ = $pQ$.dequeue()\\
        \hspace{1cm}        $\pi^{-1}(x)$ = $i$, $\pi(i)$ = $x$\\
        \hspace{1cm}        for each $y \in N_x$ such that $\pi^{-1}(y) = 0$ do\\
        \hspace{1.5cm}          $label_x$ = $label_x + 1$\\
        \\
        \hline
        
    \end{tabular}\\

    Robert E. Tarjan and M. Yannakakis proved the following theorem [6].

\begin{tw}
    $G$ is a chordal graph if and only if $MCS$-order of $G$ is a perfect elimination order.
\end{tw}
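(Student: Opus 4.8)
The plan is to mirror the structure of the Rose--Tarjan--Lueker theorem proved above, replacing the LB-property argument by one tailored to the counting labels of MCS. For the forward direction I would assume $G$ is chordal and let $\pi = v_1, \ldots, v_n$ be an MCS order with the vertices renumbered accordingly; I then want to show every $v_i$ is simplicial in the graph induced by $v_1, \ldots, v_{i-1}$. Suppose not: there are $a < b < v_i$ both adjacent to $v_i$ with $ab \notin E$. The key fact to establish first is an MCS analogue of the LB-property: if $a < b < c$, $ac \in E$, $ab \notin E$, then there exists $d < a$ with $db \in E$ and $dc \notin E$. I would prove this by the same label-bookkeeping used in Lemma~2.4: at the moment $a$ is visited, $\mathit{label}_c$ is incremented but $\mathit{label}_b$ is not; since $b$ was dequeued before $c$, at that dequeue time $\mathit{label}_b \ge \mathit{label}_c$, so $b$ must have received at least as many increments from vertices preceding $a$ as $c$ did, and one can extract a vertex $d$ preceding $a$ adjacent to $b$ but not to $c$ (a counting/pigeonhole step rather than the strict lexicographic comparison of the LexBFS proof).

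Once this LB-type property is in hand, the rest of the forward direction is literally the infinite-descent argument already written for Theorem~2.6: from $a < b < v_i$, $a v_i \in E$, $a b \notin E$ get $c < a$ with $cb \in E$, $c v_i \notin E$; chordality of $G$ applied to the $4$-cycle $(c, a, v_i, b)$ forces $ca \notin E$; now $c < a < b$, $cb \in E$, $ca \notin E$, so apply the property again to obtain $d < c$, and chordality kills the new diagonal via the $5$-cycle $(d, c, b, v_i, a)$; iterate, contradicting finiteness of $V$. I would state explicitly that this portion is identical to the earlier proof and not repeat it in full.

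For the converse I can reuse the earlier argument verbatim: if $\pi = v_1, \ldots, v_n$ is \emph{any} perfect elimination order and $C$ is a cycle of length $\ge 4$, take $v_i \in C$ of largest index; its two neighbours $a, b$ on $C$ both precede $v_i$, hence $ab \in E$ since $v_i$ is simplicial among its predecessors, giving $C$ a chord. This needs nothing specific to MCS, so I would simply cite the $(\Leftarrow)$ direction of Theorem~2.6.

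The main obstacle is the MCS version of the LB-property. In LexBFS the full history of each vertex is encoded in its string label, so the first position where two labels differ pins down a specific separating vertex $d$; in MCS only the \emph{cardinality} survives, so I must argue that $b$ accumulating at least as many early increments as $c$, combined with the fact that $a$ itself contributes to $c$ but not to $b$, forces the existence of such a $d$ strictly before $a$. I expect this to require a careful comparison of the two neighbourhoods $\{u < a : ub \in E\}$ and $\{u < a : uc \in E\}$ at the iteration when $a$ is processed, plus the observation (as in the case analysis of Lemma~2.4) that any separating vertex $d$ cannot lie in $[a, c]$ or $[c, b]$ without contradicting $a < b < c$ together with $ac \in E$, $ab \notin E$. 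Everything else is a transcription of the proofs already given.
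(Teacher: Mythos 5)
The paper itself does not prove this theorem --- it is only quoted from Tarjan and Yannakakis [6] --- so your argument has to stand on its own, and its central lemma does not. The claimed MCS analogue of the LB-property, that $a<b<c$ with $ac\in E$, $ab\notin E$ forces some $d<a$ with $db\in E$ and $dc\notin E$, is false, even for chordal graphs. Take the path $c-a-d-b$ (edges $ac$, $ad$, $db$). Breaking ties arbitrarily, MCS may visit $a$ first, then $d$ (label $1$, tied with $c$), then $b$ (label $1$, tied with $c$), then $c$; so $\pi=(a,d,b,c)$ is a legitimate MCS order of a chordal graph, it contains the pattern $a<b<c$, $ac\in E$, $ab\notin E$, and yet no vertex precedes $a$ at all. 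The pigeonhole step in your sketch is exactly where this breaks: when $b$ is dequeued you only know that the label of $b$ is at least the label of $c$ \emph{counted over all vertices preceding $b$}, not over vertices preceding $a$; discarding the contribution of $a$ therefore yields a vertex $d<b$ with $db\in E$, $dc\notin E$ (the so-called MNS property), but $d$ may lie strictly between $a$ and $b$. In the LexBFS proof (Lemma 4.2) it is the \emph{position} of the first differing entry of the string labels, together with the monotone decrease of the appended numbers, that pins $d$ before $a$; a bare cardinality carries no positional information, and the case analysis you propose to import (``$d$ cannot lie in $[a,c]$ or $[c,b]$'') has no counterpart here --- as the example shows, no argument can recover $d<a$.

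Consequently the forward direction does not go through as written: with only $d<b$ available, the infinite descent of Theorem 5.1 does not transfer verbatim, since the newly produced vertex need not move strictly to the left and the chordality step must cope with $d$ interleaving the vertices already found. You would need a genuinely different argument for ``chordal $\Rightarrow$ MCS order is a perfect elimination order'', e.g.\ Tarjan and Yannakakis's own proof, or the standard proof that every MNS ordering of a chordal graph is a perfect elimination ordering (minimal counterexample combined with a chordless path through earlier neighbours), of which MCS is a special case. Your backward direction is fine: it is word for word the $(\Leftarrow)$ part of Theorem 5.1 and uses nothing specific to MCS.
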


\subsection{Algorithm to test chordality}

Theorem 5.1. gives us the following tool to test if a given graph is chordal. First we run 
the LexBFS algorithm to produce a LexBFS order. Next we~check if the LexBFS order is 
the perfect elimination order. 

Let $\pi = v_1, \ldots, v_n$ be an order returned by the LexBFS algorithm. For~each~$v_i$ 
let $LN_{v_i} \subset N_{v_i}$ be vertices adjacent to $v_i$ on the left from $v_i$ in $\pi$
and let~$p_{v_i} \in LN_{v_i}$ be the right most vertex in $LN_{v_i}$. 

The algorithm presented below tests if $\pi$ is a perfect elimination order. This is performed
by checking if for each $v_i$ it holds that $LN_{v_i}~-~\{p_{v_i}\}~\subset~LN_{p_{v_i}}$.
The correctness of such the approach is proved later.

To test if $\pi$ is a perfect elimiantion order we only need to check if~for~each~$v_i$
is $LN_{v_i} - \{p_{v_i}\} \subset LN_{p_{v_i}}$.\\ 

\begin{tabular}{l}
\hline
Test if a LexBFS order is a perfect elimination order\\
\hline
\\
chordalityTest()\\
\hspace{0.5cm}  for $x$ = $1$ to $n$ do $p_x = 0$\\
\hspace{0.5cm}  for each $y \in N_x$ that $\pi^{-1}(y) < \pi^{-1}(x)$ do\\
\hspace{1cm}        $LN_x$.add($y$)\\
\hspace{1cm}        if $\pi^{-1}(y) > \pi^{-1}(p_x)$ then $p_x = y$\\ 
\hspace{0.5cm}  for $x$ = $1$ to $n$ do\\
\hspace{1cm}        for each $y \in N_x$ do\\
\hspace{1.5cm}          $visited_y = 1$\\
\hspace{1cm}        for each $y \in N_x$ do\\
\hspace{1.5cm}          if $p_y = x$ then\\
\hspace{2cm}                // check if $LN_y - \{x\} \subset LN_x$\\
\hspace{2cm}                for each $z \in LN_y$ such that $z \ne x$ do\\
\hspace{2.5cm}                  if $visited_z \ne 1$ then\\
\hspace{3cm}                        return false\\
\hspace{1cm}        for each $y \in N_x$ do\\
\hspace{1.5cm}          $visited_y = 0$\\
\hspace{0.5cm}  return true\\
\\
\hline
\end{tabular}\\

\subsection{Correctness and complexity of algorithm}

We prove that the algorithm for testing if a LexBFS order is a perfect elimination order 
is correct.

Let $\pi$ be a perfect elimination order. We show that then the algorithm returns $true$. 
Let $v$ be some node of $G$. There are two cases:

\begin{enumerate}
    \item $p_v = 0$. 
        Then $LN_v$ is empty and the algorithm does not return false. 
    \item $p_v = u$, for some $u$.
        Then algorithm checks if $LN_v - \{u\} \subset LN_u$. 
    
        \begin{figure}[h!]
            \begin{center}
                \includegraphics[width=9cm]{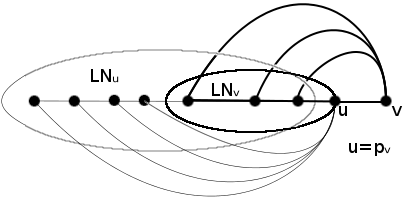}
            \end{center}
        \end{figure}

        Because all nodes of set $LN_v - \{u\}$ are on the left of $u$ then~they~are 
        candidates to members of $LN_u$. It remains to show that they are adjacent to~$u$. 
        Because $u = p_v$ then $u \in LN_v$ and $LN_v$ is a clique (because $\pi$ is a perfect 
        elimination order) therefore all vertices of $LN_v-\{u\}$ are adjacent to $u$ and 
        $LN_v - \{u\} \subset LN_u$. So the algorithm never return false, hence it returns
        true in the end.
\end{enumerate}

Let $\pi$ be not a perfect elimination order. We show that then the algorithm returns $false$.
We assume that vertices of $G$ are renumbered according to $\pi$. As $\pi$ is not a perfect 
elimination order, there exists some node in $\pi$ that~its left neighborhood does not induce 
a clique. Let $v$ be the first such~node in $\pi$ and let $p_v = u$ for some $u$. Because $u < v$ 
in $\pi$ then $LN_u$ is~a~clique (as $v$ is the~first vertex in $\pi$ for which $LN_v$ is not 
a clique). Therefore $LN_v - \{u\} \not\subset LN_u$ and the algorithm returns $false$.\\

The time complexity of the algorithm is determined by the nested loops. Let $v$ be some node of $G$. 
Note that the size of $LN_v$ is $O(|N_v|)$. Let us see how many times the algorithm scans $N_v$. 

\begin{enumerate}
    \item marking $visited$ array
    \item looking for $p_v$ 
    \item unmarking $visited$ array
    \item for $u$ such that $p_u = v$ and for each $v$ there is at most one such vertex~$u$
\end{enumerate}

It means that each list $N_v$ is read at most four times which gives $O(M)$ time for the whole 
graph. Summing up with the~time of producing the~LexBFS order, the test of chordality takes 
$O(N + M)$ time.

\section{Parallel algorithm}

Testing chordality of graphs has two steps: finding a LexBFS order and~checking if the LexBFS 
order is a perfect elimination order. To parallelize the chordality test we need to parallelize 
each of these steps separately. 

\subsection{Parallel LexBFS}

In our approach to the parallel version of LexBFS, the main loop of algorithm runs on the CPU and 
during each iteration $i$ two task are performed on the GPU. The first one is choosing the vertex 
$v$ with the lexicographically largest label and the second one is concatenating labels of vertices 
adjacent to $v$ with $N-i$. Both jobs are performed by N threads assigned to N vertices in a graph.

\begin{figure}[h]
    \begin{center}
        \includegraphics[width=7.5cm]{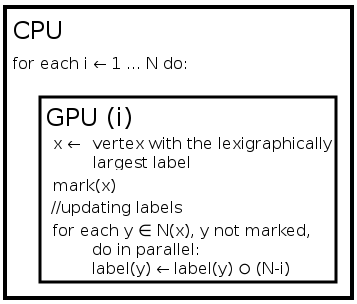}
        \caption{LexBFS algorithm}
    \end{center}
\end{figure}

We use the following data structures. The graph $G$ is stored in an adjacency matrix $Adj$. 
A linked list $L$ is a list of sets $L_k$. Each set $L_k$ includes all vertices whose labels 
are equal $k$. We identify the label of a set with~the~label of nodes in that set. These sets
form a partition of the vertex set, defined by means of labels. $L$ is sorted lexicographically 
ascending. As $L$ is a linked list, each set of $L$ has a pointer $next$ leading to the next 
set in the order, $next = NULL$ in the last set in $L$. The $order$ array stores the order 
of~nodes computed by the algorithm. We say that a node is $active$ if it has not processed yet.
(see Figure 3.)

\begin{figure}[h!]
    \begin{center}
        \includegraphics[width=6.5cm]{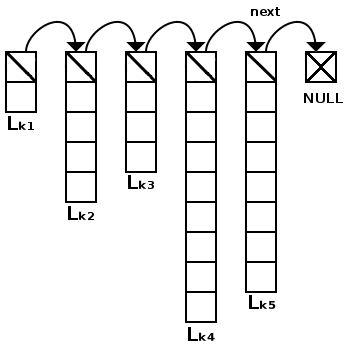}
        \caption{List $L$ including sets: $L_{k1}$, $L_{k2}$, $L_{k3}$, $L_{k4}$, $L_{k5}$. 
            The last set on~the~list is $L_{k5}$.}
    \end{center}
\end{figure}

At the begining of algorithm all nodes of $G$ are active and they have the~same label $\emptyset$. 
It means that the list $L$ has only one set $L_{\emptyset}$ consisting of~all nodes of $G$ and 
$next$ leads to $NULL$.

In a sequential version of algorithm, to find the vertex with the lexicographically largest label, 
the algorithm returns any vertex belonging to~the~last set on $L$. As the last set on $L$ is 
characterized by the pointer next equal $NULL$, then this procedure can be performed in parallel 
by $N$ threads as~follows. 

Let $x$ be the vertex assigned to the thread~$th_x$. Let $L_x$ be the set 
including $x$. Let $current$ be a global variable shared by all threads.
For~each~thread $th_x$ in parallel do:\\

\hspace{0.75cm} if $L_x.next = NULL$ then $current \leftarrow x$\\

After this procedure the $current$ variable stores a vertex whose label is~lexicographically 
the largest. Note that if there is more than one such~vertex then we cannot predict which one 
will be stored in $current$.

Let $i$ be the iteration of the main loop in which $current$ has the lexicographically largest label.
Let $y$ be some neighbor of $current$, $l_y$ be the label of $y$ and let $y$ be in the set $L_y$.

The update operation concatenates label $l_y$ in back with number $N-i$. Note that the number $N-i$ 
has not appeared in any label so far and it is the smallest among all numbers occuring in the labels. 

Next, the algorithm removes $y$ from $L_y$ and inserts it to the new set containing the nodes with
the label $l_y \circ (N-i)$. If the new set has not existed yet then the algorithm creates it.

Let us look closely at the operation of creating a new set. Let $A$ and~$B$ be~two sets of nodes on 
the list $L$ containing nodes labeled $l_A$ and $l_B$ respectively. Assume that $l_A < l_B$, i.e. 
$A$ comes before $B$ in $L$.

\begin{lm} If $j$ is a number of iteration during which the new set containing nodes with the label 
    $l_A \circ (N-j)$ is created then $l_A < (l_A \circ (N-j)) < l_B$ in $L$.
\end{lm}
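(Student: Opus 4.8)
The plan is to compare the string $l_A \circ (N-j)$ with the two existing labels $l_A$ and $l_B$ under the lexicographic order, using two basic facts about the number $N-j$ established in the running text: (i) $N-j$ has not appeared in any label prior to iteration $j$, and (ii) $N-j$ is strictly smaller than every number occurring in any label present at that moment (since the numbers appended are $N-1, N-2, \ldots$ in decreasing order as $i$ increases). I would first fix notation: write $l_A = a_1 a_2 \cdots a_p$ and $l_B = b_1 b_2 \cdots b_q$, and recall that $l_A < l_B$ lexicographically means either $l_A$ is a proper prefix of $l_B$, or there is a first position $t$ with $a_t < b_t$.

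For the left inequality $l_A < l_A \circ (N-j)$: this is immediate, since $l_A$ is a proper prefix of $l_A \circ (N-j)$, and in the lexicographic order a proper prefix is always strictly smaller than the extended string. No use of the properties of $N-j$ is needed here.

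For the right inequality $l_A \circ (N-j) < l_B$: I would split into the two cases coming from $l_A < l_B$. In the prefix case, $l_A$ is a proper prefix of $l_B$, so $l_B = a_1 \cdots a_p b_{p+1} \cdots b_q$ with $q > p$; comparing $l_A \circ (N-j)$ with $l_B$, the first $p$ positions agree, and at position $p+1$ we compare $N-j$ with $b_{p+1}$. Since $b_{p+1}$ is a number already occurring in the label $l_B$ at iteration $j$, property (ii) gives $N-j < b_{p+1}$, hence $l_A \circ (N-j) < l_B$. In the mismatch case, let $t \le \min(p,q)$ be the first index with $a_t < b_t$; then the strings $l_A \circ (N-j)$ and $l_B$ still first differ at position $t$ (positions before $t$ are unchanged by the concatenation, which only appends at the end), and there $a_t < b_t$ still holds, so again $l_A \circ (N-j) < l_B$.

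The only subtlety — and the step I would treat most carefully — is making sure the comparison position for $l_A \circ (N-j)$ versus $l_B$ really is governed by $l_A$ versus $l_B$ rather than by the appended symbol: the appended $N-j$ sits strictly past position $p = |l_A|$, so it can only matter in the prefix case, which is exactly where property (ii) is invoked. One should also note, for completeness, that $A$ and $B$ are distinct sets at iteration $j$ (so $l_A \ne l_B$) and that the newly created set, being labeled $l_A \circ (N-j)$ which differs from both, is genuinely new; the two displayed strict inequalities then pin down its insertion point between $A$ and $B$ in the sorted list $L$, which is what the lemma asserts.
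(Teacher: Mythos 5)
Your proposal is correct and follows essentially the same route as the paper: the same case split on whether the first disagreement between $l_A$ and $l_B$ occurs within both strings or $l_A$ is a proper prefix of $l_B$, using the key fact that $N-j$ is smaller than every number already occurring in a label, plus the prefix-extension fact for $l_A < l_A \circ (N-j)$. If anything, your write-up is slightly more careful about why the appended symbol cannot affect the comparison in the mismatch case, a point the paper leaves implicit.
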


\begin{figure}[h]
    \begin{center}
        \includegraphics[width=11cm]{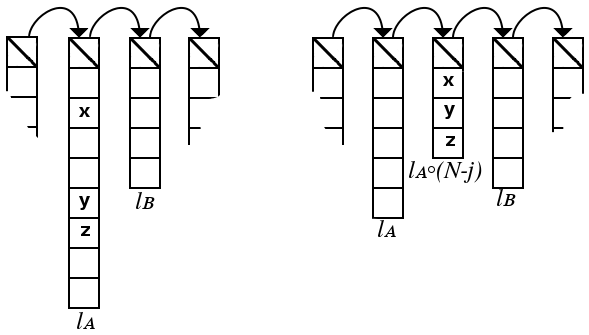}
        \caption{The list $L$ before and after moving $x$, $y$, $z$ to the new set. 
        Vertices $x, y, z$ are adjacent to $current$ vertex.}
    \end{center}
\end{figure}

\begin{proof} Each label is a string of numbers. Let $l_A = \{a_1, a_2, a_3, \ldots, a_{|l_A|} \}$, 
    $l_A \circ (N-j) = \{a_1, a_2, a_3, \ldots, a_{|l_A|}, N-j\}$ and 
    $l_B = \{b_1, b_2, b_3, \ldots, b_{|l_B|} \}$.
    Let~$k$ be the smallest index such that $\forall i<k:\ a_i = b_i$ and $a_k \ne b_k$. There~are two cases:
    \begin{enumerate}
        \item $k \le |l_A|$ and $k \le |l_B|$\\
            Then $a_k < b_k$ and this two numbers determine $l_A \circ (N-j) < l_B$. 
        \item $|l_A| < |l_B|$ and $k = |l_A| + 1$\\
            Then $l_A$ is a prefix of $l_B$ and concatenation also gives $l_A \circ (N-j) < l_B$ 
            because $(N-j)$ is less than all numbers in $l_B$ so in particular smaller than $b_k$.
    \end{enumerate}
    Note that always $l_B < l_B \circ (N-j)$ because of $|l_B| < |l_B \circ (N-j)|$.
\end{proof}

The lemma gives us the following observation:

\begin{obs}
When a new set is created for vertices from a given set $S$ then it should be inserted between $S$
and its successor on the list $L$. 
\end{obs}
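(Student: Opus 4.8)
The plan is to derive the Observation directly from Lemma 6.1. Fix the iteration $j$ and the source set $S$ on the list $L$ whose label we denote $l_S$, and suppose $S$ contains at least one neighbour of the vertex $current$ chosen in iteration $j$. Every such neighbour lying in $S$ receives the same updated label $l_S \circ (N-j)$, so a single new set, carrying the label $l_S \circ (N-j)$, has to be placed somewhere into $L$; the task is to show that its correct place is the slot immediately following $S$.

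First I would handle the generic case in which $S$ has a successor $T$ on $L$, with label $l_T$. Since $L$ is maintained sorted lexicographically ascending and $S$ precedes $T$, we have $l_S < l_T$, so Lemma 6.1 applies with $A = S$ and $B = T$ and gives $l_S < l_S \circ (N-j) < l_T$. Because $S$ and $T$ are consecutive in $L$, no set of $L$ has a label strictly between $l_S$ and $l_T$; together with the previous strict inequalities this pins the position of a set with label $l_S \circ (N-j)$ to exactly the gap between $S$ and $T$, which is the assertion. Incidentally, the same inequalities show that a set with label $l_S \circ (N-j)$ cannot already be present elsewhere in $L$, consistent with the hypothesis that a new set is being created.

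Then I would dispose of the boundary case $S.next = NULL$, i.e.\ $S$ is the last set of $L$. Here there is no $T$ and only $l_S < l_S \circ (N-j)$ is required, which holds because $l_S$ is a proper prefix of $l_S \circ (N-j)$ (the closing remark in the proof of Lemma 6.1). Hence the new set is appended right after $S$, again ``between $S$ and its successor'' with the successor being $NULL$.

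I do not anticipate a real obstacle: all the mathematical content is already in Lemma 6.1, and what remains is only to invoke the sortedness invariant of $L$. The one point worth a sentence is that several source sets may be refined in the same iteration; one then notes that distinct source sets keep their original relative order on $L$ and each receives its offspring in its own gap, so Lemma 6.1, applied to each pair consisting of a source set and the next set on $L$ (refined or not), shows the simultaneous insertions remain mutually consistent — routine bookkeeping rather than a difficulty.
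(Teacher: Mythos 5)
Your argument is correct and follows essentially the same route as the paper: the paper presents the Observation as an immediate consequence of Lemma 6.1 combined with the fact that $L$ is kept sorted lexicographically ascending, which is exactly what you do (with the boundary case $S.next = NULL$ and the simultaneous-refinement remark being the only, harmless, extra bookkeeping). No gap to report.
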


Based on observation 6.1., for each new set we can determine its place in~the~list without 
any additional list traversal or label comparisons. 

How many new sets are created during one iteration? The answers is: at most one for each old one.
Indeed if $y$ and $z$ are neighbors of $current$ vertex and belonging to some set $S$ then their
labels are equal both before and after concatenating them with $N-i$.

Since in our algorithm updating labels is performed in parallel, it could happen that for some 
new label several threads would simultaneously create several new sets and then insert them to 
the list. In order to avoid such~a~mistake we use synchronization between performed instructions.
\newpage
Let $i$ be the number of iteration and let $current$ be a vertex with lexicographically the largest 
label chosen during $i$ iteration. Let $x$ be the vertex assigned to the thread $th_x$. Let $x$ 
belong to set $L_x$. Let $oldNext_x$ and~$newNext_x$ be private variables of the thread $th_x$. 
For~all threads $th_x$ in~parallel do:\\

\hspace{0.75cm} 1. if $x$ is not active or $x$ is not adjacent to $current$ then stop

\hspace{0.75cm} 2. $oldNext_x \leftarrow L_x.next$

\hspace{0.75cm} 3. create a new set $newNext_x$

\hspace{0.75cm} 4. synchronization: wait for other threads

\hspace{0.75cm} 5. set pointers: 

\hspace{1.4cm}      $L_x.next = newNext_x$

\hspace{1.4cm}      $newNext_x.next = oldNext_x$

\hspace{0.75cm} 6. synchronization: wait for other threads

\hspace{0.75cm} 7. insert $x$ to $L_x.next$\\

Each thread creates its new set $newNext_x$ and inserts it to the order. After that for each 
node $x$, $newNext_x.next = oldNext_x$ but only for one of~them we will have $L_x.next = newNext_x$. 
See Figure 5.

\begin{figure}[h]
    \begin{center}
        \includegraphics[width=8cm]{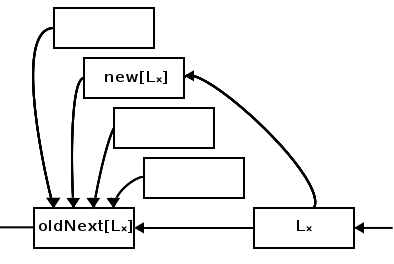}
        \caption{Insert a new set $newNext_x$ into the list $L$}
    \end{center}
\end{figure}

Note that we cannot predict which $newNext_x$ will be in $L_x.next$ so synchronization is performed 
to all threads read the same $L_x.next$ inserted after~$L_x$ in the list. The new sets of other 
threads are forgotten.

Now let us look at removing vertices from the sets. After this operation some sets can be empty. 
To get a vertex of the lexicographically largest label in the list, we take the last set on the 
list and this set cannot be empty. Therefore after each update operation, when removing vertices 
from sets is~performed, we remove all empty sets from the list.\\
\newpage
Before we show the procedure of removing the empty sets, consider the~following lemma. 

\begin{lm}
    If, after the update operation, the set $L_k$ is empty then its successor on the list, if
    exists, is nonempty.
\end{lm}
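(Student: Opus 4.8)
The plan is to analyze what the update operation does to consecutive sets on the list $L$. Recall from Lemma 6.1 and Observation 6.1 that during iteration $i$, for each old set $S$ on $L$ that contains neighbors of $current$, a single new set is created and inserted immediately between $S$ and its old successor; the neighbors of $current$ in $S$ are removed from $S$ and placed into that new set (which sits right after $S$). So after the update, the list is an interleaving of old sets (possibly shrunk) and freshly created new sets, where every new set immediately follows the old set it was split off from. I would first record this structural fact precisely, since it controls which sets can be empty.

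Next I would do the case analysis on an empty set $L_k$ after the update. First, $L_k$ cannot be one of the newly created sets, because a new set is created only when at least one neighbor of $current$ is moved into it — so new sets are always nonempty. Hence $L_k$ must be an old set that became empty, which means every vertex formerly in $L_k$ was a neighbor of $current$ and got moved out. But then, by the structural fact above, a new set was created for $L_k$ and inserted as its immediate successor, and that new set contains exactly those moved vertices, so it is nonempty. Therefore the successor of $L_k$ exists and is nonempty, which is exactly the claim.

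The one subtlety I would be careful about — and I expect this to be the only real obstacle — is the boundary case where $L_k$ had no successor before the update (it was the last set on $L$). If $L_k$ becomes empty, all its vertices were neighbors of $current$, so a new set was spawned right after it, and now $L_k$ does have a successor, namely that nonempty new set; so the statement still holds (the hypothesis "if exists" is satisfied and the conclusion holds). I would also note the trivial complementary case: if some vertex remains in $L_k$, then $L_k$ is not empty and there is nothing to prove. Putting these together gives the lemma; I would phrase the write-up as "$L_k$ empty $\Rightarrow$ $L_k$ is an old set all of whose vertices were moved $\Rightarrow$ a nonempty new set was inserted immediately after $L_k$," with the boundary remark folded in.
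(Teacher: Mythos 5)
Your proposal is correct and follows essentially the same route as the paper: an old set $L_k$ can only empty out if all its vertices were neighbors of $current$, and by Observation 6.2 those vertices land in the new set inserted immediately after $L_k$, so $L_k$'s successor is nonempty. Your extra remarks (newly created sets that are linked into $L$ are never empty, and the boundary case where $L_k$ was last) are just careful completions of the same argument, which the paper leaves implicit.
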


\begin{figure}[h]
    \begin{center}
        \includegraphics[width=9cm]{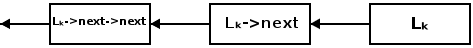}
    \end{center}
\end{figure}

\begin{proof}
    Let $L_k$ be not empty set before the update operation. If after the~update operation 
    $L_k$ is empty then from Observation 6.2. we know that the~set $L_k.next$ includes all 
    vertices which were in set $L_k$ before the update. 
\end{proof}

The observation that for each empty set, its predecessor and successor are~nonempty means that 
all empty sets can be removed in parallel in one time. The removal of one set requires changing 
only two links of two adjacent non-empty sets. It is correct because it does not require any 
additional traversing through the list and it is independent from removing other empty sets.

To find out which sets are empty, we use an additional $counter$ array. For each set in the list,
$counter$ stores $1$ if the set includes at least one node or $0$ otherwise. At~the beginning 
all slots of $counter$ are $0$. 

Let $x$ be the vertex assigned to the thread $th_x$. Let $x$ be in set $L_x$. Let $i$ be the number 
of iteration. Let $current$ be the vertex with lexicographically the largest label during $i$ 
iteration. Let $oldNext_x$ and $newNext_x$ be private variables of the thread $th_x$. 
For all threads $th_x$ in parallel do:\\

\hspace{0.75cm} 1. if $x$ is not active then stop

\hspace{0.75cm} 2. $counter[L_x] \leftarrow 1$

\hspace{0.75cm} 3. $oldNext_x \leftarrow L_x.next$

\hspace{0.75cm} 4. synchronizetion: wait for other threads

\hspace{0.75cm} 5. if $counter[oldNext_x] = 1$ then stop

\hspace{0.75cm} 6. set $L_x.next \leftarrow oldNext_x.next$\\

We use synchronization between instructions to make sure that counting vertices and removing 
empty sets are correct. Otherwise some sets could be~removed despite they are not empty and 
pointers could be changed improperly.\\
\newpage
In our implementation, getting the vertex of lexicographically the largest label and updating 
the labels of all adjacent vertices are performed in four stages. Each stage run on the GPU 
and they are synchronized - the~new one does not start until the last one has not finished. 
Because we use the~CUDA language to implement the algorithm we use the term \textit{kernel} 
instead of \textit{stage}. Each kernel is executed by $N$ threads, one for each vertex.

At the beginning of algorithm all vertices have lexicographically the same label then we set 
$current$ to random vertex $1$. Next, in each iteration of $for$ loop, one vertex is choosen 
to the LexBFS order. 

The first kernel adds $current$ vertex to the LexBFS order and marks it as~non-active. Next,
the first part of the labels updating is performed: setting counters of sets on $0$ and 
saving the $next$ pointers of each set.

In the second kernel, the new sets are inserted to the list $L$.

In the third kernel, each vertex that is active and adjacent to $current$ is moved to the next 
set. Next, the counting is performed: for each active vertex, the counter of its set is made 
equal $1$. At the end, the first part of~deleting the empty set is performed: saving the $next$ 
pointers of each set.

The last kernel deletes all empty sets and chooses the new $current$ vertex.

At the end of the LexBFS algorithm, the $order$ array stores the LexBFS order.\\

\begin{tabular}{l}
\hline
Parallel Lexicographic Breadth-Frist Search algorithm\\
\hline
\\
parallelLexBFS()\\
\hspace{0.5cm} $current \leftarrow 1$\\
\hspace{0.5cm} for $time \leftarrow 1$ to $n$ do\\
\hspace{0.5cm}\hspace{0.5cm} kernel1()\\
\hspace{0.5cm}\hspace{0.5cm} kernel2()\\
\hspace{0.5cm}\hspace{0.5cm} kernel3()\\
\hspace{0.5cm}\hspace{0.5cm} kernel4()\\
\\
kernel1()\\
\hspace{0.5cm} if $x$ is active then\\
\hspace{1cm}        $oldNext_x \leftarrow L_x.next$\\
\hspace{1cm}        $counter[L_x] \leftarrow 0$\\
\hspace{1cm}        create $newNext_x$\\
\hspace{0.5cm} if $x$ is $current$ then\\
\hspace{1cm}        $order[time] \leftarrow x$\\
\hspace{1cm}        mark $x$ as non-active\\
\\
kernel2()\\
\hspace{0.5cm} if $x$ is active and $x$ is adjacent to $current$ then\\
\hspace{1cm}        \textit{//inserting new sets to list}\\
\hspace{1cm}        $L_x.next \leftarrow newNext_x$\\
\hspace{1cm}        $newNext_x.next \leftarrow oldNext_x$\\
\\
kernel3()\\
\hspace{0.5cm} if $x$ is active and $x$ is adjacent to $current$ then\\
\hspace{1cm}        \textit{//moving to new sets}\\ 
\hspace{1cm}        move $x$ to $L_x.next$\\
\hspace{0.5cm} if $x$ is active then\\
\hspace{1cm}        $counter[L_x] \leftarrow 1$\\
\hspace{1cm}        $oldNext_x \leftarrow L_x.next$\\
\\
kernel4()\\
\hspace{0.5cm}  if $x$ is active then\\
\hspace{1cm}        if $counter[L_x] = 0$ then\\
\hspace{1.5cm}          \textit{//deleting empty sets}\\
\hspace{1.5cm}          $L_x.next \leftarrow oldNext_x.next$\\
\hspace{1cm}        if $L_x.next = NULL$ then\\
\hspace{1.5cm}          \textit{//updating current}\\
\hspace{1.5cm}          $current \leftarrow x$\\
\\

\hline

\end{tabular}\\

\newpage

\subsection{Parallel test for perfect elimination order}

Now we are given the LexBFS order $\pi$. The second step of testing chordality of graphs 
is checking if the LexBFS order is the perfect elimination order, that is, if for each 
vertex $x$, the neighborhood of $x$ on the left in the order forms a~clique. 

Let $LN_x \subset N_x$ be the set of all nodes adjacent to $x$ that lie on~the~left of~$x$
in $\pi$ and let $p_x$ be the right most node in $LN_x$. In the sequential version of the 
algorithm, for~each node $x$ we check if $LN_x - \{p_x\} \subset LN_{p_x}$. Now~we~do~this 
in parallel for~all nodes. 

The algorithm has two kernels. The first one, for each $x$, in parallel computes the left 
neighborhood $LN_x$ and the right most vertex in $LN_x$. In~the~second kernel, each thread 
$th_x$ processes the left neighborhood of $x$ and the left neighborhood of $p_x$. If some 
left neighbor of $x$, different from $p_x$, is not a left neighbor of $p_x$ then $th_x$ 
marks the global variable $flag$ on false. At~the~end, if $flag$ is true then the order 
is the perfect elimiantion order.\\

\begin{tabular}{l}
\hline
Parallel Test for Perfect Elimination Order\\
\hline
\\
//run on the cpu\\
parallelTestPEO()\\
\hspace{0.5cm}  $flag \leftarrow true$\\
\hspace{0.5cm}  preparationLNandP()\\
\hspace{0.5cm}  testing()\\
\hspace{0.5cm}  if $flag = true$ then return YES\\
\hspace{0.5cm}  else return NO\\
\\
//run on the gpu\\
preperationLNandP()\\
\hspace{0.5cm}  $p_x \leftarrow 0$\\
\hspace{0.5cm}  for each $y$ adjacent to $x$ do\\
\hspace{1cm}        if $order^{-1}(y) < order^{-1}(x)$ then\\
\hspace{1.5cm}          $LN_x$.insert(y)\\
\hspace{1.5cm}          if $order^{-1}(y) > order^{-1}(p_x)$ then\\
\hspace{2cm}                $p_x \leftarrow y$\\
\\
//run on the gpu\\
testing()\\
\hspace{0.5cm}  for each $y$ adjacent to $x$ do\\
\hspace{1cm}        if $y \in LN_x$ and $y \notin LN_{p_x}$ then\\
\hspace{1.5cm}          $flag \leftarrow false$\\
\\

\hline

\end{tabular}\\

\subsection{Details of the parallel implementation}

After~reading~the~input,~the~algorithm~copies~~the~~adjacency~~matrix of~the~graph to the array 
on the device memory. During the LexBFS algorithm, no other memory transfer between host 
and device is performed.

Since the algorithm does not compare any labels, the label concatenation can be omitted. 
Instead of this, the algorithm assignes to the new sets the~numbers which have not appeared 
yet. In each iteration, there are at~most $N$ new sets, hence during the whole algorithm 
there are at~most $N^2$ new sets. Since each set has a pointer to the next set on a list 
then~to~store all~pointers, the algorithm uses an array~of~~size~$N^2$,~~which~is~indexed 
by~the~numbers of the sets.

Our parallel implementation of the LexBFS algorithm uses the following arrays:

\begin{enumerate}
    \item the $Adj$ array of size $N^2$ for the boolean adjacency matrix.
    \item the $label$ array of size $N$ for the integer labels of sets.
    \item the $order$~~array~~of~~size~~$N$~~for~~the~~integer~~indices~~of~~vertices in~the~LexBFS~order.
    \item the $next\_label$ array of size $N^2$ for the integer indices of the next sets.
    \item the $old\_next\_label$ auxiliary array of size $N$ for the saved values 
        from~the~$next\_label$ array, one value for each vertex.
    \item the $counter$ array of size $N^2$ for the boolean flags for recognizing 
        if~a~set is empty.
    \item the $current$ variable for the integer number of vertex whose label is 
        lexicographically the largest.
\end{enumerate}

During an iteration, the algorithm processes one $current$ vertex which~is assigned to one row 
of every 2-dimensional array and all threads process that~row in one time. See figure below.

\begin{figure}[h]
    \begin{center}
        \includegraphics[width=5cm]{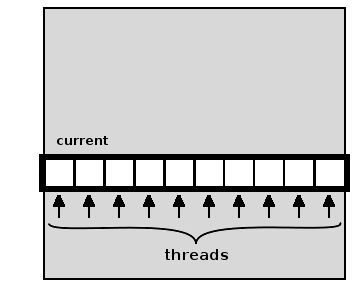}
    \end{center}
\end{figure}

Because each $current$ is unique and unrepeatable throughout the LexBFS algorithm then each row 
is visited only once. Therefore, in order to reduce the amount of the device memory used by our 
algorithm, we use the~\text{2-dimensional} $Adj$ array for two purposes: first as the adjacency matrix, 
next as the counter array. 

The second part of the chordality algorithm uses two arrays: the~$Adj$ array and the $order$ array. 
Because the~$Adj$ array is overwitten after the~LexBFS algorithm then the algorithm 
copies again the adjacency matrix from the host memory to the $Adj$ array on the device memory.

\section{Tests and results}

We introduce the following terminology to be used in this section. A~graph $G=(V, E)$ with the 
vertex set $V$ of size $N$ is $sparse$ if the size of $E$ is $\theta(N)$. A graph $G$ is $dense$ 
if the size of $E$ is $\theta(N^2)$. We consider the following classes of graphs:

\begin{enumerate}
    \item Cliques on $N$-vertices, for $N \in \{1000, 2000, 3000, \ldots, 10000\}$.
    \item Dense random graphs on $10000$ vertices.
    \item Sparse random graphs on $10000$ vertices.
    \item Trees on $10000$ vertices.
    \item Chordal random graphs on $10000$ vertices.
\end{enumerate}

We test two implementations. The sequential implementation is the~Habib, McConnell, Paul and 
Viennot algorithm presented in [2], which use a static memory allocation. For each class, we 
also present the time excluding the~input reading and the dynamic allocation of the device 
memory. For the parallel implementation, the time of reading the input and the dynamic allocation 
on the GPU is many times greater than the remaining time of the algorithm. However algorithm 
cannot be implemented without these operations.

\newpage
\subsection{Cliques}

Figure 6 presents timing results for cliques. For graphs of size smaller than 1000, the
sequential version is faster. When vertices number is 10000, the parallel implementation is 
two times faster than the sequential one. 

\begin{figure}[h!]
    \caption{Cliques}
    \begin{center}
    \begin{tabular}{|r|c|c|c|c|}
        \hline
        & \multicolumn{4}{|c|}{the time (ms)}\\
        \cline{2-5}
        & \multicolumn{2}{|c|}{GPU} & \multicolumn{2}{|c|}{CPU}\\
        \cline{2-5}
            & without input and &  & without & \\
        N   & memory allocation time &  & input time & \\
        \hline
        1000 & 0.4 &  1.5 &  1.3 &  2.1\\
        2000 & 0.8 &  4.9 &  4.9 &  8.3\\
        3000 & 1.4 &  9.8 & 11.1 & 18.9\\
        4000 & 2.1 & 17.0 & 19.5 & 33.7\\
        5000 & 2.7 & 26.2 & 30.5 & 52.1\\
        6000 & 3.7 & 37.1 & 44.0 & 74.8\\
        7000 & 4.4 & 50.1 & 60.0 &101.7\\
        8000 & 5.4 & 66.0 & 77.8 &132.6\\
        9000 & 6.6 & 83.2 & 99.0 &168.3\\
       10000 & 7.8 &101.8 &121.9 &207.3\\
       11000 & 8.9 &126.0 &147.0 &251.4\\
        \hline
    \end{tabular}
    \includegraphics[width=14.1cm]{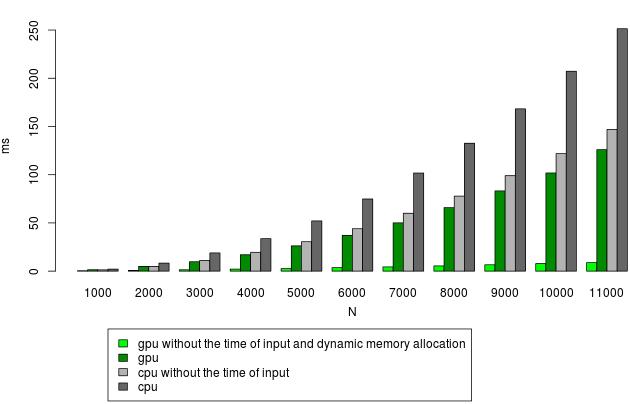}\\
    \end{center}
\end{figure}

\newpage
\subsection{Dense graphs}

Figure 7 presents timing results for dense random graphs. For each test the parallel 
implementation is almost two times faster than the sequential implementation.

\begin{figure}[h!]
    \caption{Dense random graphs: $N=10000$, $M = O(N^2)$}
    \begin{center}
    \begin{tabular}{|r|c|c|c|c|}
        \hline
        & \multicolumn{4}{|c|}{the time (ms)}\\
        \cline{2-5}
        & \multicolumn{2}{|c|}{GPU} & \multicolumn{2}{|c|}{CPU}\\
        \cline{2-5}
            & without input and &  & without & \\
           & memory allocation time &  & input time & \\
        \hline
        test1 & 9.0 & 107.4 & 107.1 & 191.9\\
        test2 & 8.9 & 108.6 & 106.6 & 191.0\\
        test3 & 8.9 & 106.2 & 107.4 & 191.6\\
        test4 & 8.9 & 106.7 & 106.4 & 191.3\\
        test5 & 8.9 & 107.2 & 109.0 & 191.3\\
        \hline
    \end{tabular}
    \includegraphics[width=13cm]{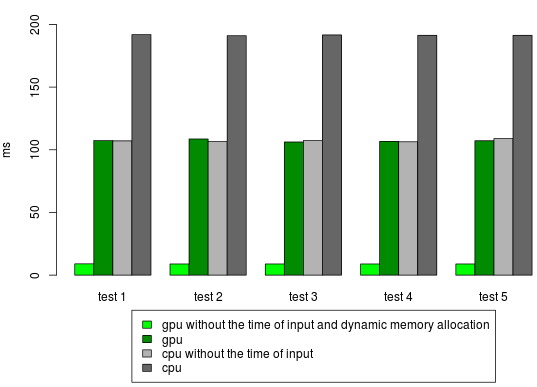}\\
    \end{center}
\end{figure}
\newpage

\subsection{Sparse graphs}

We have tested our implementation on sparse random graphs which $M=20N$.
The parallel implementation is slower than the sequential implementation.
(Figure 8)

\begin{figure}[h!]
    \caption{Sparse random graphs: $N=10000$, $M=20N$}
    \begin{center}
    \begin{tabular}{|r|c|c|c|c|}
        \hline
        & \multicolumn{4}{|c|}{the time (ms)}\\
        \cline{2-5}
        & \multicolumn{2}{|c|}{GPU} & \multicolumn{2}{|c|}{CPU}\\
        \cline{2-5}
            & without input and &  & without & \\
           & memory allocation time&  & input time & \\
        \hline
        test1 & 11.3 & 92.7 & 0.9 & 71.6\\
        test2 & 11.2 & 91.5 & 0.7 & 71.0\\
        test3 & 11.2 & 91.1 & 0.8 & 71.1\\
        test4 & 11.2 & 90.9 & 0.8 & 71.1\\
        test5 & 11.2 & 92.0 & 0.8 & 72.0\\
        \hline
    \end{tabular}

    \includegraphics[width=13cm]{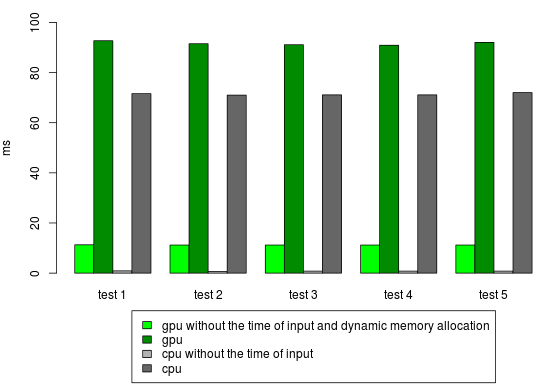}\\
    \end{center}
\end{figure}
\newpage

\subsection{Trees}

The results for trees are very similar to the results for sparse random graphs (Figure 9).

\begin{figure}[h!]
    \caption{Trees: N=10000}
    \begin{center}
    \begin{tabular}{|r|c|c|c|c|}
        \hline
        & \multicolumn{4}{|c|}{the time (ms)}\\
        \cline{2-5}
        & \multicolumn{2}{|c|}{GPU} & \multicolumn{2}{|c|}{CPU}\\
        \cline{2-5}
            & without input and &  & without & \\
           & memory allocation time&  & input time & \\
        \hline
        test1 & 7.2 & 86.9 & 1.4 & 73.0\\
        test2 & 7.3 & 87.6 & 1.4 & 73.3\\
        test3 & 7.4 & 87.6 & 1.1 & 71.0\\
        test4 & 7.3 & 86.6 & 0.0 & 70.3\\
        test5 & 7.3 & 88.0 & 0.1 & 70.0\\
        test6 & 7.5 & 87.7 & 0.1 & 70.3\\
        test7 & 7.2 & 87.6 & 1.0 & 71.2\\
        \hline
    \end{tabular}
        \includegraphics[width=13cm]{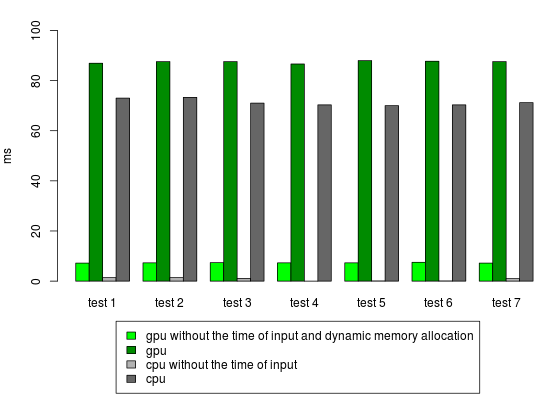}\\
    \end{center}
\end{figure}
\newpage

\subsection{Chordal graphs}

Figure 10 presents timing results for chordal random graphs, including dense and sparse 
graphs. Only for sparse graphs the parallel implementation is slower. On this figure it is easy to see
that the parallel implementation is stable - the time of algorithm is independent from the number of 
edges, in~contrast to the sequential implementation.

\begin{figure}[h!]
    \caption{Chordal random graphs, N=10000}
    \begin{center}
    \begin{tabular}{|r|c|c|c|c|}
        \hline
        & \multicolumn{4}{|c|}{the time (ms)}\\
        \cline{2-5}
        & \multicolumn{2}{|c|}{GPU} & \multicolumn{2}{|c|}{CPU}\\
        \cline{2-5}
            & without input and &  & without & \\
           & memory allocation time&  & input time & \\
        \hline
        test1 & 7.2 &  92.0 & 19.2 & 92.8\\
        test2 & 7.9 &  99.0 & 66.4 & 146.5\\
        test3 & 7.9 &  99.2 & 68.9 & 149.2\\
        test4 & 7.6 &  98.1 & 62.8 & 142.5\\
        test5 & 7.8 &  95.7 & 42.4 & 120.2\\
        test6 & 7.4 &  90.0 & 12.8 & 86.0\\
        test7 & 7.5 &  90.7 & 13.1 & 85.4\\
        test8 & 7.4 &  90.1 & 11.7 & 83.8\\
        \hline
    \end{tabular}
    \includegraphics[width=13cm]{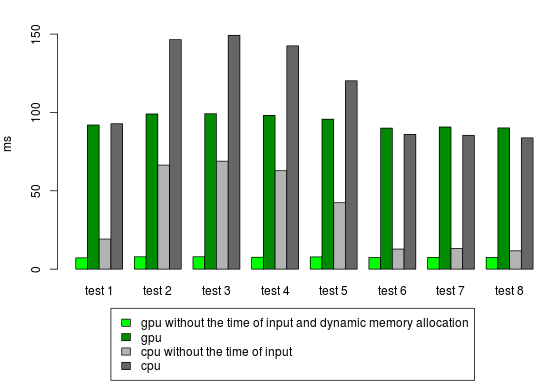}
    \end{center}
\end{figure}

\section{Conclusion and Future Work}

The main result of this paper is the parallel algorithm to test chordality of graphs 
based on our own efficient parallel version the LexBFS algorithm. For a graph $G$ of 
$N$ vertices and $M$ edges, the algorithm takes the $O(N)$ time and performs the $O(N^2)$ 
work on the $N$-threads machine. We use the~CUDA multithreads architecture to implement 
these algorithms.

Our parallel implementation achives best results for cliques and dense graphs. For 
graphs of 1000 and more vertices, the parallel algorithm is significantly faster 
than our fast sequential implementation and for graphs of~10000 vertices, the parallel 
implementation is two times faster than~the~sequential version. For trees, sparse graphs 
and small graphs (less than 1000 vertices) the sequential algorithm outperforms the
parallel one. However, for~this kind of data the parallel implementation is stable, 
the execution time is independent of the size of a graph.

It would be interesting if the parallel LexBFS algorithm could be used as~a~core for 
efficient parallel testing of interval graphs. Further research could be also made 
towards parallel implementation of the MCS algorithm.

\newpage


\begin{thebibliography}{}
    \bibitem{1} P. C. Gilmore, A. J. Hoffman: \textit{A characterization of comparability 
        graphs and of interval graphs}\\
        Canad. J. Math. 16 (1964) 539-548.
        
    \bibitem{2} Michel Habib, Ross McConnell, Christophe Paul, Laurent Viennot:
        \textit{Lex-BFS and partition refinement, with applications to transitive orientation,
        interval graph recognition and consecutive ones testing}\\
        Theoretical Computer Science 234 (2000) 59-84.

    \bibitem{3} Donald J. Rose, Robert E. Tarjan, and George S. Lueker:
        \textit{Algorithmic aspects of vertex elimination on graphs}\\
        SIAM J. Comput., 5(2):266–283, 1976.
    
    \bibitem{4} NVIDIA: 
        \textit{NVIDIA CUDA C Programming Guide}\\
        http://docs.nvidia.com/cuda/pdf/CUDA\_C\_Programming\_Guide.pdf

    \bibitem{5} NVIDIA: 
        \textit{CUDA C BEST PRACTICES GUIDE}\\
        http://docs.nvidia.com/cuda/pdf/CUDA\_C\_Best\_Practices\_Guide.pdf

    \bibitem{6} Robert E. Tarjan, Mihalis Yannakakis: 
        \textit{Simple linear-time algorithms to test chordality of graphs, test 
            acyclicity of hypergraphs, and selectively reduce acyclic hypergraphs}\\
            SIAM J. Comput., 13(3):566-579, August 1984.

    \bibitem{7} Thomas H. Cormen, Charles E. Leiserson, Ronald L. Rivest, Clifford Stein:
            \textit{Introduction to Algorithms}\\
            The MIT Press, 2009.

\end{thebibliography}
\end{document}